\begin{document}

\newcommand{\eat}[1]{}
\newcommand{\eg}{{\em e.g.}, }
\newcommand{\ie}{{\em i.e.}, }
\newcommand{\etal}{{\em et al.\ }}

\newcommand{\todo}[1]{\textcolor{red}{{\bf TODO:} #1}}

\title{Scalable Social Coordination using Enmeshed Queries}

\def\sharedaffiliation{%
\end{tabular}
\begin{tabular}{c}}

\numberofauthors{3}
\author{
\alignauthor
Jianjun Chen\\
\affaddr{Google Inc.} \\
\affaddr{Mountain View, CA, USA} \\
\email{jianjunc@google.com}
\alignauthor
Ashwin Machanavajjhala\\
\affaddr{Duke University} \\
\affaddr{Durham, NC, USA} \\
\email{ashwin@cs.duke.edu}
\alignauthor
George Varghese\\
\affaddr{Microsoft Research} \\
\affaddr{Mountain View, CA, USA} \\
\email{varghese@google.com}
}

\newcommand{\squishlist}{
   \begin{list}{$\bullet$}
    {
      \setlength{\itemsep}{0pt}
      \setlength{\parsep}{3pt}
      \setlength{\topsep}{3pt}
      \setlength{\partopsep}{0pt}
      \setlength{\leftmargin}{1.5em}
      \setlength{\labelwidth}{1em}
      \setlength{\labelsep}{0.5em} } }

\newcommand{\squishlisttwo}{
   \begin{list}{$\bullet$}
    {
      \setlength{\itemsep}{0pt}
      \setlength{\parsep}{0pt}
      \setlength{\topsep}{0pt}
      \setlength{\partopsep}{0pt}	
      \setlength{\leftmargin}{2em}
      \setlength{\labelwidth}{1.5em}
      \setlength{\labelsep}{0.5em} } }

\newcommand{\squishend}{
    \end{list}  }

\newtheorem{definition}{Definition}
\newtheorem{theorem}{Theorem}
\newtheorem{example}{Example}
\newtheorem{lemma}{Lemma}
\newtheorem{claim}{Claim}

\maketitle

\begin{abstract}
Social coordination allows users to move beyond awareness of their friends to efficiently coordinating physical activities with others.  While specific forms of social coordination can be seen in tools such as Evite, Meetup and Groupon, we introduce a more general model using what we call {\em enmeshed queries}.  An enmeshed query allows users to declaratively specify an intent to coordinate by specifying social attributes such as the desired group
size and who/what/when, and the database returns matching queries.  Enmeshed queries are continuous, but new queries (and not data) answer older queries; the variable group size also makes enmeshed queries different from entangled queries, publish-subscribe systems, and dating services.

We show that even offline group coordination using enmeshed queries is NP-hard.   We then introduce efficient heuristics that use selective indices such as location and time to reduce the space of possible matches; we also add refinements such as delayed evaluation and using the relative matchability of users to determine search order.   We describe a centralized implementation and evaluate its performance against an optimal algorithm.   We show that the combination of not stopping prematurely (after finding a match) and delayed evaluation results in an algorithm that finds 86\% of the matches found by an optimal algorithm, and takes an average of 40 $\mu$sec per query using 1 core of a 2.5 Ghz server machine.   Further, the algorithm has good latency, is reasonably fair to large group size requests, and can be scaled to global workloads using multiple cores and multiple servers. We conclude by describing  potential generalizations that add prices, recommendations, and data mining to basic enmeshed queries.
\end{abstract}

\section{Introduction}\label{sec:intro}
While social networks like Facebook are widely used for interacting with friends, they seem less useful for planning and {\em coordinating} with other people.  But, this is a significant part of what being {\em social} entails --- we go to birthday parties, we play sports, we fraternize with like-minded people on even obscure topics such as fly-fishing.  Further, the default method of coordinating with people by phone/email is tedious, and sometimes hard to drive to closure.  Thus social coordination seems the next step beyond social networking.

There are a number of ad hoc social coordination
tools already in the market.  These include \url{evite.com}  (coordinator specifies list of invitees and an event),  \url{doodle.com} (users specify a range of times for a meeting and the coordinator can look for common times), \url{meetup.com}  (users specify a zipcode and a topic, and then browse through the groups returned by the system), \url{foursquare.com}  (users check into a database of locations, and can query the
system for the location of their friends), and \url{groupon.com}  (users specify a location and interest, and then browse through a set of deals; however, deals only take effect when a critical mass of users sign up for the deal).

Existing social coordination tools have four disadvantages:
{\em 1. Ad Hoc:} Each system is tailor-made
to a specific application although they have many abstract features in common.  {\em 2. Limited Query Capability:} While most provide a rudimentary query facility, they also involve considerable browsing and selection by the user. {\em 3. Not Continuous:} If there is no current choice that satisfies the user, the user must retry later: synchronous polling is needed as opposed to asynchronous notification by the  system.
{\em 4. No group size control:}  Users cannot specify limits on the group size.

In contrast, our goal is to develop tools for {\em fluid} social coordination which have the following properties.
\squishlist
\item {\em Generic Platform:} We abstract coordination as finding matches on key attributes such as people, activity, location, and time in a common platform that many social applications can use.
\item  {\em Declarative Queries:} Users specify predicates that prescribe the kind of groups they wish: the system matches users to groups without need for browsing.
\item {\em Continuous:} If there is no current choice, the system will retain the user query and subsequently attempt to match this query when future queries enter.
\item {\em Group Constraints:}  Users can specify group constraints (such as bounds on group size).
\squishend
By fluid, we mean that coordination is not limited to static groups such as friends; instead, we allow coordinating even with strangers, and with different sets of people for different activities.  Granovetter~\cite{granovetter1973} has argued that novel information flow (such as job opportunities) typically occurs through ``weak" ties (e.g., casual acquaintances) than through ``strong" ties (e.g., friends and family).  We suggest that allowing people to perform fluid social coordination may encourage the formation of weak ties.

Fluid coordination appears in a number of real world applications. Consider multiplayer online gaming where users wish to form groups whose sizes can depend on the game.  The user is indifferent to player identities except that they be close by (to reduce latency), and have similar Internet access speeds and game ratings.   By allowing a user, of say Xbox LIVE, to specify parameters in  these three key attributes and a group size (say 4 to play Halo), our system can match users up.  As a second example, consider finding 3 partners among all Microsoft employees for playing doubles tennis in Sunnyvale at 4 pm on the weekend.
Private data collection is a third example. Users may allow a hospital to publish their information only if they share the same ``quasi-identifying'' attribute with a group of $k-1$ other users ($k$-anonymity \cite{sweeney2002}), or if additionally $\ell$ distinct diseases appear in their group ($\ell$-diversity \cite{Machanavajjhala2006}).

Complete fluidity may alarm some users, especially when considering physical activities such as tennis, where a user may not wish to play with
another player, say, with a criminal record.  We allow users effectively to ``scope'' their coordination intents using constraints. For instance, a user can request to only play with other users affiliated with Duke University (i.e., with a \url{duke.edu} email suffix). Users can set the scope
to be wide enough to match their query and yet narrow enough to stay in their comfort zone.

In this paper, we present a system that can support fluid coordination as described in the above examples. In this system, users describe their coordination intent using a novel concept called {\em enmeshed queries}, which allow a user to specify coordination constraints, like who/what/when, and constraints on coordinating group, like on the number of people forming a group. The formalization and scalable implementation of these queries is the main contribution of this paper.

From an algorithmic viewpoint, think of the set of enmeshed queries as forming the nodes of a graph. We place an edge from node $R$ to node $S$ if $R$ is {\em compatible} with $S$.  For example, the queries of two tennis players who have compatible ratings who wish to play at the same time and the same place will be linked by an edge.  The query system attempts to carve this graph into cliques such that each clique satisfies its group size constraints; for example, one clique could be a set of 4 compatible tennis players to play doubles.

\noindent{\bf Contributions and Paper Outline:} First, we formalize the problem of fluid social coordination by defining the concept of enmeshed queries.   Second, we show that optimal algorithms for matching enmeshed queries are NP-hard;  however, we introduce heuristics that  are close to optimal  on typical workloads.  Third, we describe an implementation that can scale to millions of concurrent enmeshed queries.

The rest of the paper is organized as follows. Section~\ref{sec:related} describes related work.  Section~\ref{sec:model} describes enmeshed queries, and theoretical hardness results are described in Section~\ref{sec:complexity}. Algorithms are described in Section~\ref{sec:algo}.  Section~\ref{sec:expt} presents an evaluation that suggests that our algorithms can scale to millions of queries per second. We conclude (Section~\ref{sec:conc}) with a list of future research directions. 

\section{Related Work}
\label{sec:related}

Enmeshed queries are most closely related to entangled queries~\cite{entangledS2011, KotGRGK10, entangledV2012}.  The seminal paper on entangled queries~\cite{entangledS2011} required users to explicitly specify partners they wish to coordinate with (e.g., tennis with John and Sarah).  More recent work~\cite{entangledV2012}  allows some flexibility in choosing partners by allowing an entangled query to specify {\em one} partner to be any friend (or someone related to the user through a prespecified binary relation). While enmeshed and entangled queries share the goal of declarative social coordination, there are significant differences. First, enmeshed queries allow group constraints such as bounds on group sizes; entangled queries do not. Second, enmeshed queries are designed for fluid coordination -- partners need not be explicitly specified, and indeed may not even be friends.   
Third, enmeshed queries adopt online algorithms that match a newly arrived query with a subset of existing queries. Our online algorithms are designed to scale to millions of concurrent queries. In contrast, entangled queries consider (offline) coordination among a given small set of queries (1000s of queries in \cite{entangledS2011, entangledV2012}).
Entangled queries, on the other hand, have considerable power in specifying constraints on external relations that may be the final object of coordination (e.g., a tennis reservation at a specified court).   An interesting future direction is to combine the power of these  two formalisms.

\begin{table*}[t]
\centering
{\small
\begin{tabular}{|c|c|c|c|c|}
\hline
& Group Constraints & Independent Queries  & Query Life &  \# of queries \\
\hline
Enmeshed & Yes & No  & Short & Millions \\
Entangled & No & No & Short & Thousands \\
Continuous Queries & No & Yes  & Long & Varying \\
Pub-Sub & No & Yes & Long & Millions \\
\hline
\end{tabular}
}
\caption{\label{fig-related}Comparison among continuous queries, pub/sub systems, entangled queries and enmeshed queries.}
\end{table*}

Pub-sub systems~\cite{pubsub00}  and continuous query systems~\cite{terry92} also provide declarative continuous query evaluation, but each query is logically independent; further there are no constraints on groups. On the other hand, enmeshed  queries find matches among queries.  Our approach to find candidate query matches is related to work in efficiently evaluating boolean expressions~\cite{fontoura10}. Finally, enmeshed queries differ from nested transactions \cite{lynch1988}, since at query time the system does not know which other enmeshed queries it is waiting on. Table~\ref{fig-related} summarizes these comparisons.

There is also related work on team formation in social networks~\cite{lappas09} that studies the following problem: given a set of people (i.e. nodes in a graph) with certain skills and communication costs across people (i.e. edges in the graph), and a task $T$ that requires some set of skills, find a subset of people to perform $T$ with minimal communication costs. They prove that such problems are NP hard, and describe heuristics to reduce computation. Our problem is not the same as task formation because tasks are not explicit first-class entities in enmeshed queries but are, instead, implicit in the desires of users.  Further, our metric is maximizing matches and not minimizing communication.

\section{Model and Preliminaries}
\label{sec:model}
\newcommand{\users}{\ensuremath{{\cal U}}}
\newcommand{\queries}{\ensuremath{{\cal Q}}}
\newcommand{\A}{\ensuremath{{\cal A}}}
\renewcommand{\S}{\ensuremath{{\cal S}}}
\newcommand{\J}{\ensuremath{{\cal J}}}
\newcommand{\G}{\ensuremath{{\cal G}}}
\newcommand{\card}{\ensuremath{\mbox{card}}}
\newcommand{\comp}{\ensuremath{\;\theta\;}}
\newcommand{\that}{\ensuremath{\mathbf{that}}}
\newcommand{\N}{\mathbb{N}}
\newcommand{\alg}{\ensuremath{\mbox{\sc alg}}}
\newcommand{\opt}{\ensuremath{\mbox{\sc opt}}}

We formally define enmeshed queries and the problem of social coordination.

\noindent{\bf Users}: Consider a set of users $\users$ who wish to participate in coordination tasks. Each user is associated with a set of attributes $\A_\users$.  Examples of user attributes include age, home location, tennis rating, etc. Denote by $dom(A)$ the domain of an attribute $A$, and by $dom(\A) = \times_{A \in \A} dom(A)$ the cross product of the domains of a set of attribtues $\A$. A point $\vec{x} = [x_1, \ldots, x_k]$ is a multidimensional value from $dom(\{A_1, \ldots, A_k\})$.

\noindent{\bf Enmeshed Queries}: Users specify their intent to coordinate using one or more enmeshed queries.
An enmeshed query $q$ is associated with (i) a unique user $q.user$, and (ii) a set of free coordination variables
$q.\vec{x} = [q.x_1, q.x_2, \ldots, q.x_{|C|}]$. Each free variable $q.x_i$ takes values from a distinct coordination attribute $A_i \in \A_C$.  Examples of coordination attributes are time(when), location (where), activity type (what), etc. Enmeshed queries define coordination intent by specifying constraints. Formally, an enmeshed query is a triple $(\S, \J, \G)$, where $\S$ is a set of selection constraints, $\J$ is a set of join constraints, and $\G$ is a set of group constraints.  We define $\S$, $\J$, and $\G$ in turn.

First, $\S$ specifies the query's coordination intent $\S$ by specifying sets of possible values for each coordination variable. For instance, a user may want to play tennis or squash, between 4 and 6 PM, in either Cupertino or Mountain View. More formally, coordination intent
$\S$ is specified as a conjunction of {\em selection constraints}, where each selection constraint is of the form $x_i \in S$, where  $S \subseteq dom(A_i)$. Note that queries may not pose any constraint on some coordination attributes; this is modeled using $S = dom(A_i)$.

Second, $\J$ specifies additional {\em join constraints} over the user attribute values on pairs of enmeshed queries. Examples include:  Alice would like to coordinate only with her friends, and Alice would like to play tennis with users who have a higher rating.  If Alice knew a priori that she was coordinating with Bob, then Alice could easily express the above rating constraint as $Bob \in Alice.Friends$ and  $(Bob.rating > Alice.rating)$. However, when declaring coordination intent, Alice does not know who she is coordinating with. To describe join constraints, we introduce the notation $\that$ to refer to attributes of other users who may potentially coordinate on a task. Thus the above constraints in Alice's query $q$ can be written as  $\that.id \in user.Friends$ and $\that.rating > user.rating$.  $\J$ is a conjunction of such individual join constraints.

Finally, $\G$ specifies predicates on allowable groups.  For example, one could specify the average rating of the group of individuals for a doubles tennis game.  For this paper, we focus on the simplest and most useful group constraint, a {\em cardinality constraint} such as ``want to play tennis with at least 2 or at most 4 individuals''. Formally, a cardinality constraint $\G$ is expressed as $\card(q) \in [lb, ub]$, where $[lb, ub]$ ($lb \geq 2$) is the set of integers $\leq ub$ and $\geq lb$.

\noindent{\bf Example}: 
A user wanting to play tennis in Cupertino or Sunnyvale at 8 PM with 2,3 or 4 people with a rating 5 can be written as the following query:
\begin{eqnarray*}
& A_{time} \in [8PM] \wedge A_{location} \in \{Cupertino, Sunnyvale\} \wedge & \\
& A_{sport} \in \{Tennis\} \wedge \that.A_{rating} = 5 \wedge card(q) \in [2,4]&
\end{eqnarray*}

\noindent{\bf Matching Queries}:  We define the semantics for how/when coordination is achieved in stages by adding in each type of constraint in turn.
First, we say that a set of queries $q_1, \ldots, q_k$ are {\em jointly satisfied} if there is a point $\vec{p} \in dom(\A_C)$ such that for every query $q_i$ and selection constraint $x_j \in S_{ij}$ in $q_i$, we have $p_j \in S_{ij}$. Next, we incorporate join constraints as follows.  Two queries $q_1$ and $q_2$ are said to {\em match} if (i) their selection constraints are jointly satisfied, and (ii) join constraints on $q_1$ and $q_2$ are satisfied.  Finally, we add group cardinality constraints:

\begin{definition}[Committable Query Group]
A group of queries $Q \subseteq \queries$ is called a {\em committable query group} if
\squishlist
\item Selection conditions on queries in $Q$ are jointly satisfied,
\item $\forall q_1, q_2 \in Q$, $q_1$ and $q_2$ match on join constraints, and,
\item $|Q|$ satisfies the cardinality constraint of all $q \in Q$.
\squishend
\end{definition}
A committable group can be returned by the system as a valid set of users who can be matched together. All queries in such a group are called {\em committed}.

\noindent{\bf Problem Statement}:  User coordination is now reduced to the problem of finding committable query groups. Note that the problem is online -- the system cannot wait till all the queries have been submitted.

\begin{definition}[Group Coordination Problem]\ \\
Given a stream of enmeshed queries $\queries = q_1, q_2, \ldots, q_n$, find sets of committable query groups such that the number of committed queries is maximized. We denote by $\opt(\queries)$ the {\em offline optimal}, or the maximum number of committable queries that can be committed if all the queries are known upfront. Our goal is to design an algorithm $\alg$ such that for any finite
subsequence $Q$ of $\queries$, the number of queries committed ($\alg(Q)$) is as close to $\opt(Q)$ as possible.
\end{definition}

While maximizing the throughput (number of committed queries) is our main goal, we will also experimentally measure query processing time, latency (measured from when a query enters to when it was actually committed as a group) and fairness with respect to group size of our algorithms.

\section{Complexity}
\label{sec:complexity}
We show complexity results for the Offline and Online group coordination problems.
\subsection{Offline Problem}
In this section we show that finding the offline optimal $OPT(\queries)$ for the
group coordination problem is NP-hard, whether we want to maximize the number
of committed queries or groups. When maximizing queries, or groups, we show
the problem is NP-hard even when there is a single coordination attribute.

\begin{lemma}
Given a set of enmeshed queries $\queries$ with a single coordination attribute $A$,
computing the maximum number of committed groups is NP-hard.
\end{lemma}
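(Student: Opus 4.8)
The plan is to reduce from \textsc{Partition into Triangles}: given an undirected graph $G=(V,E)$ with $|V|=3m$, decide whether $V$ can be partitioned into $m$ vertex-disjoint triangles. This is a classical NP-complete problem. From $G$ I build an enmeshed-query instance with a single coordination attribute $A$: create one query $q_v$ per vertex $v\in V$, let every $q_v$ carry the trivial selection constraint $x_1\in dom(A)$ (so any set of queries is jointly satisfied on the lone coordination variable), and encode adjacency through join constraints by setting $q_v.user=v$ and giving $q_v$ the constraint $\that.id\in N(v)$, where $N(v)$ is the neighborhood of $v$ in $G$. Then $q_u$ and $q_v$ match on join constraints exactly when $\{u,v\}\in E$. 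Finally give every $q_v$ the cardinality constraint $card(q_v)\in[3,3]$.

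Next I would establish the correspondence: a set $Q$ of queries is a committable query group in this instance if and only if the corresponding vertex set is a triangle of $G$. Indeed, the cardinality constraint forces $|Q|=3$; the selection constraints are vacuously jointly satisfied; and, by construction, the pairwise join-match condition holds for all three pairs in $Q$ precisely when all three edges are present in $G$. Since committed queries partition into disjoint committable groups, a family of committable groups corresponds exactly to a packing of vertex-disjoint triangles of $G$, and the number of groups equals the number of triangles in the packing. Because $m=|V|/3$ is the trivial upper bound on such a packing, the maximum number of committable groups equals $m$ if and only if $G$ admits a partition into $m$ triangles, and is strictly smaller otherwise. Hence any algorithm computing the maximum number of committed groups decides \textsc{Partition into Triangles}; as the reduction is plainly polynomial, the lemma follows.

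I expect the delicate points to be bookkeeping rather than conceptual: (i) checking that the join-constraint template $\that.id\in user.Friends$ permitted by the model is expressive enough to realize an arbitrary symmetric adjacency relation (it is, taking $user.Friends:=N(v)$), and (ii) arguing from the semantics of committable groups that committed queries are consumed, so that maximizing the count of committed groups is genuinely triangle \emph{packing} and not a weaker fractional/overlapping variant. As an alternative that uses \emph{no} join constraints at all, one can encode the triangles of $G$ directly into the selection sets over $A$: introduce one domain value $p_T$ for each triangle $T$ of $G$ (there are at most $\binom{|V|}{3}$ of them) and set $q_v.S=\{\,p_T : v\in T\,\}$; then the queries sharing any common point $p_T$ are exactly the three vertices of $T$, so the size-$3$ committable groups are again precisely the triangles of $G$. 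This shows the hardness already holds with a single coordination attribute even if only selection constraints are allowed.
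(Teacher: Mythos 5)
Your proof is correct, but it takes a genuinely different route from the paper. The paper reduces from \emph{maximum independent set}: it creates one query per \emph{edge} $e=(u,v)$, with selection constraint $A\in\{u,v\}$ and cardinality constraint $\{d_u,d_v\}$ (the endpoint degrees), so that the only committable groups are full stars $S_u$ of edges around a vertex, and disjoint families of committed stars correspond exactly to independent sets. You instead reduce from \textsc{Partition into Triangles}: one query per \emph{vertex} with cardinality fixed to $[3,3]$, adjacency encoded either through join constraints ($\that.id\in N(v)$) or, in your alternative, through a triangle-indexed domain with only selection constraints; committable groups are then precisely triangles, and maximizing disjoint committed groups is triangle packing. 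Both reductions are polynomial and both correctly use the disjointness of committed groups, which you are right to flag explicitly. The comparison: the paper's construction already needs no join constraints, so your primary (join-constraint) version proves a formally weaker statement -- it is your selection-only variant that matches the paper's strength, and you should treat that variant as the main construction rather than a remark. What your approach buys in exchange is hardness even when every cardinality constraint is the constant interval $[3,3]$ (i.e., tiny groups, consistent with the paper's interval-form definition of cardinality constraints and with its small-group assumption A2), whereas the paper's construction uses two-valued cardinality sets $\{d_u,d_v\}$ that can be as large as the maximum degree; so the two proofs establish complementary refinements of the same lemma.
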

\begin{proof}(sketch)
We show hardness via a reduction to the well known NP-hard problem of
finding the largest independent set in a graph \cite{gareyj:np}.

Given  an instance $G = (V,E)$ of the max independent set problem,
we construct an instance of the group commit problem as follows. There is
a single attribute $A$, which has one value $a_v$ for every node $v \in V$; i.e.,
$|A| = |V|$. The set of queries $Q$ contains one query $q_e$ for every edge
$e = (u,v) \in E$, with constraints $(A \in \{u, v\} \wedge cardinality \in
\{d_u, d_v\})$, where $d_u$ is the degree of the node $u$.

Note that in this construction, the only committable groups are
sets of edges $S_u$ that are incident on a node $u$ in the graph $G$. Moreover,
if $S_u$ commits, then for all $v$ adjacent to $u$, $S_v$ can not commit.
Therefore, every feasible set of committed groups corresponds to an independent
set in the graph.
\end{proof}

\begin{lemma}
Given a set of enmeshed queries $\queries$ with a single coordination attribute $A$,
computing the maximum number of committed queries is NP-hard.
\end{lemma}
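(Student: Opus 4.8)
The plan is to recycle the reduction from the previous lemma, the only new wrinkle being bookkeeping: committing a star $S_u$ now contributes $\deg(u)$ committed \emph{queries} rather than one committed \emph{group}, so a maximum-queries solution on the construction of the previous lemma corresponds to a maximum-\emph{weight} independent set whose vertex weights are the degrees. Rather than argue about that weighted variant directly, I would simply run the reduction on a $d$-regular input graph: maximum independent set remains NP-hard on $d$-regular graphs for every fixed $d\ge 3$ (in particular on cubic graphs~\cite{gareyj:np}), and on such a graph every star $S_u$ has exactly $d$ edges, so the number of committed queries is always exactly $d$ times the number of committed groups.

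Concretely, given a $d$-regular graph $G=(V,E)$ with $d\ge 3$, I build the single-attribute instance exactly as in the previous lemma: one attribute value $a_v$ per vertex $v$, and for each edge $e=(u,v)$ one query $q_e$ with selection constraint $A\in\{a_u,a_v\}$, no join constraints, and cardinality constraint $\card(q_e)\in[d,d]$ (a legal cardinality constraint since $d\ge 3\ge 2$). First I would show that the committable query groups are precisely the full stars $S_u=\{q_e : e \text{ incident to } u\}$: joint satisfaction on the single attribute $A$ forces all queries of a group to share a common value $a_u$, hence to be edges meeting some vertex $u$; the cardinality constraints then force the group size to be exactly $d$; and $u$ has exactly $d$ incident edges, so the group is all of $S_u$. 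Next, two stars $S_u$ and $S_v$ share the query $q_{(u,v)}$ iff $(u,v)\in E$, so a collection of committable groups is pairwise query-disjoint iff the corresponding vertex set is independent in $G$. Finally, since $|S_u|=d$ for every $u$, a feasible collection committing the stars of an independent set $I$ commits exactly $d\,|I|$ queries; hence the maximum number of committable queries equals $d\cdot\alpha(G)$, and any polynomial-time algorithm for it would compute the independence number of an arbitrary $d$-regular graph.

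The step that I expect to need the most care is pinning down that no \emph{spurious} committable group exists --- in particular ruling out proper sub-stars, or bundles of edges around several vertices. Regularity is exactly what makes this clean: because every vertex has degree $d$, each query's cardinality set is the singleton $\{d\}$, so the common group size forced by the cardinality clause of the committable-group definition must be exactly $d$; combined with the single-attribute joint-satisfiability requirement (which already confines any group to edges around a single vertex), only the full $d$-element star survives. In the non-regular construction of the previous lemma this is the same place where "coincidental" groups --- a size-$t$ bundle of edges at $u$ all of whose far endpoints happen to have degree $t$ --- could slip in, which is precisely why I would route the argument through $d$-regular graphs instead of trying to establish the degree-weighted independent set version head-on.
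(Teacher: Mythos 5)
Your proof is correct, but it takes a different route from the paper. The paper proves this lemma by switching source problems: it reduces from maximum 3-dimensional matching, creating one query $q_v$ per element $v \in X \cup Y \cup Z$ with cardinality exactly $3$ and selection set equal to the triples containing $v$; since every committable group is forced to be the three queries of a single triple, the number of committed queries is automatically three times the number of committed groups, so maximizing queries coincides with maximizing groups. You instead recycle the star gadget of the previous lemma but restrict the source graph to $d$-regular (e.g.\ cubic) instances of maximum independent set, which is also a valid NP-hard starting point; regularity plays exactly the role that the ``everything has cardinality 3'' structure plays in the paper's reduction, namely making the per-group query count a constant so that the query-maximization objective is a fixed multiple of the group-maximization one. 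Both arguments hinge on the same two facts --- joint satisfiability on a single attribute confines a group to one ``hub'' (a vertex, resp.\ a triple), and a tight cardinality constraint then forces the group to be the full hub --- so they buy essentially the same thing. Your version has the added merit that it makes the committable-group analysis airtight where the paper's first construction is loosest: by forcing all cardinality sets to the singleton $\{d\}$ you rule out the ``coincidental'' sub-stars (a bundle of $t$ edges at $u$ whose far endpoints all have degree $t$) that the non-regular construction would have to argue away, and you correctly flag this as the delicate step. The paper's choice of 3-D matching achieves the same cleanliness by other means, at the cost of introducing a second reduction rather than reusing the first.
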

\begin{proof}(sketch)
We show hardness via a reduction to the maximum 3-dimensional matching problem,
a classic NP-complete problem \cite{gareyj:np}. An instance of the 3-D matching
problem consists of disjoint sets $X, Y, Z$ and a subset $T \subseteq X
\times Y\times Z$ of triples. $M \subseteq T$ is called a {\em matching} if
for any two distinct triples $(x,y,z), (x',y',z') \in M$, we have $x \neq x',
y \neq y'$ and $z \neq z'$.

Given an instance of 3-D matching, we construct an instance of the group commit
problem as follows. There is a single coordination attribute $A$ whose domain is
$X \times Y \times Z$. Without loss of generality, we can assume that every element
in $v \in X \cup Y \cup Z$ appears in at least one triple in $T$ (otherwise 0-degree
elements can be removed from the problem). With every element $v$, we associate a
query $q_v$ having a cardinality constraint of $3$, and a coordination $A \in S$,
where $S$ is the set of triples that contain $v$.

Every committable group corresponds to a unique triple $(x,y,z)$. Moreover, since no
two committed groups can share a query, the set of committed groups gives a 3-D matching.
Finally, the proof follows from the fact that the number of queries is just 3 times the
number of committed groups.
\end{proof}

\subsection{Best-Effort Online Coordination}
In this section, we consider a special class of online algorithms. An algorithm is {\em best-effort} if when a query $q$ enters the system at most one committable group is returned and any returned committable group includes $q$.  It may not return a committable group even when one exists.  By contrast, an algorithm is {\em optimal best-effort} if when a query $q$ enters the system exactly one committable group is returned that includes $q$, if such a committable group exists.

Best-Effort algorithms are attractive because they ensure constant progress. Second, best-effort algorithms are $k$-competitive, where $k$ is the maximum cardinality constraint of a query.  That is, the number of queries committed by a best-effort algorithm is at least $1/k$
times the number of queries committed by an offline optimal algorithm. The competitive result follows because for any group output by the best-effort algorithm, in the worst case, the offline optimal might have used each query in the group to commit a separate group.   However, {\em optimal} best-effort algorithms are hard to design.

\begin{figure}[t]
\centering
\includegraphics[width=2in]{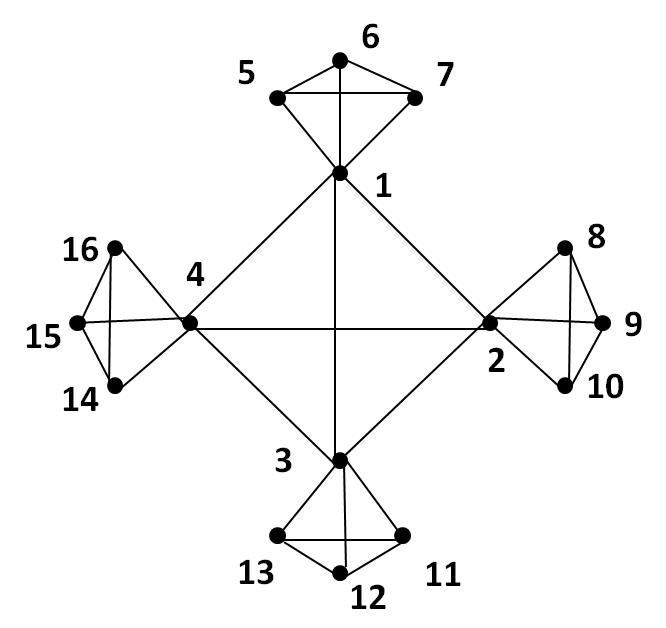}
\caption{\label{fig:competitive} Example illustrating the competitive ratio of best-effort algorithms. Nodes are queries, edges represent whether or not queries match. All queries have cardinality constraint $=4$. The numbers also represent the order in which queries enter the system. }
\end{figure}

\begin{lemma}
Best-effort algorithms are $k$-competitive.
\end{lemma}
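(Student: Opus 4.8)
The plan is to establish the equivalent inequality $\opt(Q)\le k\cdot\alg(Q)$ for every finite subsequence $Q$ of the stream, where $k$ is the largest cardinality upper bound occurring among the queries in $Q$, by charging the groups of an offline optimal solution to the queries that the best-effort algorithm $\alg$ commits.

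First I would isolate the single structural fact that is needed: every committable group $G$ satisfies $|G|\le k$. This is immediate from the definition of a committable query group --- for any $q\in G$ with cardinality constraint $[lb_q,ub_q]$ we need $|G|\le ub_q$, and $ub_q\le k$ by the choice of $k$; in particular this holds for every group committed by $\alg$ and for every group committed by $\opt$. Next comes the core claim: every group $O$ in the optimal solution shares at least one query with the set of queries committed by $\alg$. To prove it, let $q$ be the query of $O$ that arrives last in the stream and examine the moment $q$ enters. If some other query $q'\in O$ has already been placed into a committed group by $\alg$, then $q'$ itself witnesses the claim. Otherwise all queries of $O$ are present and still uncommitted at that moment, so $O$ is a committable group available to $\alg$ and containing $q$; being best-effort, $\alg$ then returns some committable group through $q$, and all queries of that group --- $q$ among them --- become committed. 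In either case $O$ meets $\alg$'s committed set.

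It then remains to count. The groups committed by $\opt$ may be taken pairwise query-disjoint, so picking from each such group one query that $\alg$ also committed defines an injection from the optimal groups into $\alg$'s committed queries; hence the optimal solution contains at most $\alg(Q)$ groups. Since each of those groups has at most $k$ queries, $\opt(Q)$, being the sum of the sizes of these groups, is at most $k\cdot\alg(Q)$; equivalently $\alg(Q)\ge\opt(Q)/k$. The matching lower bound --- that this constant cannot be improved --- is what the example in Figure~\ref{fig:competitive} is meant to show.

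The delicate point is the ``otherwise'' branch of the core claim, because that is the only place the best-effort property is used: it must be read as guaranteeing that $\alg$ commits some group through $q$ whenever a committable group containing $q$ is available when $q$ arrives. I would make this work-conserving reading explicit, since the definition taken literally also admits the trivial algorithm that commits nothing, which would be best-effort yet violate the bound. Everything else --- the size bound $|G|\le k$ and the disjointness-plus-injection count --- is routine.
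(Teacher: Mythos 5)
Your proof is correct and follows essentially the same route as the paper's own argument for the $\leq k$ direction: the paper's terse charge (``for any group output by the best-effort algorithm, the offline optimal might have used each query in the group to commit a separate group'') is precisely your disjointness-plus-injection count, and your last-arrival argument supplies the step the paper leaves implicit, namely that every optimal group must intersect the algorithm's committed set. Your caveat is also well taken: that step genuinely requires the work-conserving (``optimal best-effort'') reading, since the paper's literal definition of best-effort admits the algorithm that never commits anything, for which the bound fails. The only content of the paper's proof you do not reproduce is the explicit $k^2$-query construction showing the ratio cannot be better than $k$; that half establishes tightness and is not needed for the competitiveness statement itself, so deferring to Figure~\ref{fig:competitive} is reasonable.
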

\begin{proof}
The following construction shows that the competitive ratio is at least $k$. Consider $k^2$ queries which are numbered $q_1, \ldots, q_{k^2}$ in the order they enter the system. Queries $q_1, \ldots, q_k$ form a committable group. Additionally, $q_i$ and $q_{i(k-1) + 2}, \ldots, q_{(i+1)(k-1)+1}$ form $k$ different committable groups for $i = 1$ to $k$.
Figure~\ref{fig:competitive} shows the construction for $k = 4$; nodes are queries, edges represent whether or
not queries match. All queries have cardinality constraint $=4$. More specifically,
for queries 1-4, $q_i$'s constraint is $A \in \{x, i\}$. For queries connected
to $q_i$ ($i = 1, 2, 3, 4$), the constraint is $A \in \{i\}$.

An optimal solution is to commit $(1,5,6,7)$, $(2,8,9,10)$, $(3,11,12,13)$ and
$(4,14,15,16)$ (the $k$ groups in the general case). However, a best effort algorithm would greedily commit $(1,2,3,4)$, ($q_1, \ldots, q_k$ in the general case). Thereafter no other query can commit. Thus the competitive ratio is at least $k$.

For any group output by the best-effort algorithm, in the worst case, the offline
optimal might have used each query in the group to commit a separate group. Since the
max cardinality constraint is $k$, the competitive ratio is $\leq k$.
\end{proof}

However, we next show that best-effort algorithm also are hard to design. When a new query comes in, best-effort algorithms need to find a committable group if there is one. But, this problem is NP-hard in the presence of join constraints. In the absence of join constraints, we present a PTIME best-effort algorithm which will be the scaffolding for our efficient algorithms presented in the next section.

\begin{lemma}
Computing an optimal best-effort solution is NP-hard, even if the number of attributes is a constant. Suppose the number of attributes is a constant. Ensuring that an algorithm always finds a committable group if there is a committable group is NP-hard.
\end{lemma}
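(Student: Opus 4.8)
\noindent The plan is to isolate the decision problem that any \emph{optimal} best-effort algorithm must solve on each arrival and to show that problem is NP-hard by a reduction from \textsc{Clique}~\cite{gareyj:np}. Concretely: when a query $q$ arrives with a set $P$ of currently pending queries, an optimal best-effort algorithm must decide --- and, when the answer is yes, exhibit --- whether there is a committable group $Q$ with $q \in Q \subseteq P \cup \{q\}$. I would show that even this decision version is NP-hard with a \emph{single} coordination attribute, so that all of the hardness comes from the join constraints (which can encode an arbitrary binary relation among users). Since a polynomial-time best-effort algorithm would, in effect, answer this decision problem by whether it outputs a group, NP-hardness of the decision problem immediately yields both claims in the statement.

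\noindent The reduction: given an instance $(G = (V,E), k)$ of \textsc{Clique}, build an instance of the group coordination problem with one coordination attribute $A$ whose domain has a single value, so that the ``jointly satisfied'' requirement on selection constraints holds vacuously for every set of queries. For each $v \in V$ introduce a query $q_v$ owned by a distinct user $u_v$, and add one extra query $q^{*}$ owned by a user $u^{*}$. Define the users' friend sets so that $u^{*}$ is a mutual friend of every $u_v$, and $u_v, u_w$ are mutual friends exactly when $(v,w) \in E$; give every query the single join constraint $\that.id \in user.Friends$. With this encoding, two queries \emph{match} iff at least one of them is $q^{*}$, or else the corresponding vertices are adjacent in $G$ --- mutuality of the friend sets is what makes the directional $\that$-constraint behave symmetrically. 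Finally give every query, including $q^{*}$, the cardinality constraint $\card(\cdot) \in [k+1,\,k+1]$, let the pending set $P$ consist of all the $q_v$, and let $q^{*}$ be the arriving query.

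\noindent For correctness, any committable group $Q \ni q^{*}$ has exactly $k+1$ members by the cardinality constraint, so it is $q^{*}$ together with $k$ distinct vertex queries $q_{v_1}, \ldots, q_{v_k}$; the selection part is free, and $q^{*}$ matches each $q_{v_i}$, hence $Q$ is committable iff the $q_{v_i}$ pairwise match, i.e.\ iff $\{v_1, \ldots, v_k\}$ is a $k$-clique of $G$. Thus a committable group containing the arriving query exists iff $G$ has a clique of size $k$; the construction is clearly polynomial and uses one attribute, so the per-arrival problem an optimal best-effort algorithm must solve is NP-hard even for a constant number of attributes, and no polynomial-time algorithm can always find a committable group when one exists unless $\mathrm{P} = \mathrm{NP}$. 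The step needing care in the full write-up is this faithful encoding of adjacency into join constraints: because $\that$-predicates are directional I must force $match$ to be symmetric (done via mutual friend sets), and I must force $q^{*}$ into every size-$(k+1)$ committable group while letting it alone supply the extra slot, so that committable groups through $q^{*}$ correspond exactly to $k$-cliques of $G$ --- equivalently, the reduction is really from the NP-hard ``clique through a specified vertex'' variant of \textsc{Clique}, and the universal query $q^{*}$ is precisely the device that meets best-effort's requirement that the returned group contain the newly arrived query.
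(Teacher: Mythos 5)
Your proposal is correct and follows essentially the same route as the paper: a reduction from \textsc{Clique} with one query per vertex, adjacency encoded through a friend-set join constraint ($\that.id \in user.Friends$), every query given cardinality exactly $k+1$, and a new unconstrained ``universal'' query playing the role of the arriving query, so that a committable group containing it exists iff $G$ has a $k$-clique. Your write-up is slightly more explicit than the paper's sketch about the symmetry of the join constraints and about restricting attention to groups through the arriving query, but the construction and argument are the same.
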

\begin{proof}(sketch)
Let ${\cal S}$ be the set of query groups that can be committed due to adding $q$ to $\queries$. Suppose there do not already exist any committable groups in $\queries$. Hence, for all groups $S \in {\cal S}$, $q \in S$.

It is easy to verify that the problem is in NP. Given a group, we can efficiently check whether it is committable.

To show it is hard we present a reduction from the clique decision problem: the problem of determining whether there is a clique of size $k$ in a graph $G = (V,E)$ is well known to be NP-hard.  Given an instance of the clique problem, we construct an instance of our problem with a single attribute $A$. For every edge $e = (u,v) \in E$, we have a value $a_{e} \in A$. We use $e$ to represent both $(u,v)$ or $(v,u)$.

We construct one query $q_v$ for every $v \in V$, and each query has cardinality equal to $(k+1)$. {\em The queries are such that $(q_v, q_u)$
matches if and only if $(v,u) \in E$.} This can be achieved just by constructing  an ID attribute $A_{ID}$ and a ``friend'' attribute $A_{friend}$. For a query $q_v$, $A_{ID} = v$ and $A_{friend} = \{u | (u,v) \in E\}$. Finally, we include  a join constraint, $\that.A_{ID} \in A_{friend}$ . This require that each query $q_v$ be satisfied only by a $q_u$m, where $(u,v) \in E$.

We have a new query $q$ that matches all queries we constructed (by not having any constraint for $q$) and also having a cardinality $k+1$. If we find a committable group in the new instance, we get a clique in the original graph. Thus the problem is NP-hard.
\end{proof}

In the absence of join constraints, there is a simple best-effort algorithm (Algorithm~\ref{algo:complete}) to determine if there is a committable group in the graph when a new query $q$ enters the system. We first identify all points $p$ in $dom(\A_C)$ that satisfy $q$. For each $p$, we compute the set $S_p$ of queries which are jointly satisfied by $p$. Finally, we can efficiently (in time linear in $|S_p|$) compute a committable group of queries (which satisfy the cardinality constraints) as follows: divide the number line into segments determined by the upper and lower bounds on cardinality for all the queries in $S_p$. For each segment $(l, u)$, compute the queries $S_{(l,u)}$ that permits groups of size between $l$ and $u$. If $l \leq |S_{(l,u)}| \leq u$, then $S_{(l,u)}$ can be returned as a committable group. If there is no such group, then the result is empty. Algorithm~\ref{algo:complete} takes $O(|Q|\cdot|P|)$ time, where $P \subseteq dom(\A_C)$ is the set of points that satisfy $q$.

\begin{algorithm}[t]
\caption{Optimal Algorithm (No Join Constraints)}
\begin{algorithmic}[1]
\STATE {\bf Input:} A set of enmeshed queries $Q$, a new query $q$
\STATE {\bf Output:} $true$, if $\exists$ committable group containing $q$,
\STATE \hspace{1.5cm}$false$, otherwise.
\STATE Let $P \subseteq dom(\A_\users \cup \A_C)$, s.t., $\forall p \in P$, $p$ satisfies $q$
\FOR {$p \in P$}
\STATE {\em // STEP 1: Identify queries whose coordination constraints are satisfied by $p$.}
\STATE Let $S_p$ be the set of queries satisfied by $p$.
\STATE
\STATE {\em // STEP 2: Identify a subset that satisfies the cardinality constraints.}
\STATE Let $C$ denote a sorted list of numbers between $lb_q$ and $ub_q$ (inclusive) that also correspond to cardinality limits of queries in $S_p$
\FOR {consecutive numbers $c_i, c_{i+1} \in C$ }
\STATE Let $S_i$ denote the set of queries which allow coordination with size in $[c_i, c_{i+1}]$.
\IF {$c_i \leq |S_i| \leq c_{i+1}$}
\RETURN $true$
\ENDIF
\ENDFOR
\ENDFOR
\RETURN $false$
\end{algorithmic}
\label{algo:complete}
\end{algorithm}

\section{Scalable Group Coordination}
\label{sec:algo}
Given that optimal best-effort algorithms are NP-hard, we  present four heuristic algorithms that will work well on workloads satisfying the following assumptions:

{\em A1. High Selectivity:} We assume most queries are selective, and hence, can be indexed based on some coordination attributes. First, people know what they want: queries like ``play any sports at any time and anywhere'' are not typical. Second, selectivity can be enforced by application user interfaces that disallow wildcards, and enforce limited disjunction (e.g., at most two locations allowed in a query).

{\em A2. Small Group Coordination:} We also assume that the average cardinality of queries are small, e.g. $<50$. 
This ensures that queries do not wait too long to be committed.

All our algorithms use the following structure.   They all start by using a subset of the most selective attributes in a query as an index to quickly find a (hopefully small) ``short list'' of potentially matching queries.  For example, out of a million concurrent enmeshed queries, only 50 queries in the system may specify `playing Tennis in Cupertino at 3 PM next Tuesday'.   While the first pruning step only considers selection constraints, the second step traverses the ``short list'' to greedily attempt to build a committable group while also incorporating join and cardinality constraints.

\subsection{ Basic Matching Algorithm (BMA)}
\label{sec:BMA}

Basic matching algorithm, BMA (Algorithm~\ref{algo:match}), is a best-effort algorithm that traverses the ``short list" in random order, and stops when it finds the first committable group.

\newcommand{\qt}{\mbox{QueryTable}}
\newcommand{\pt}{\mbox{\sc Pindex}}
\newcommand{\gt}{\mbox{\sc Cindex}}

\begin{algorithm}[t]
\caption{BMA: Basic Matching Algorithm}
{\small
\begin{algorithmic}[1]
\STATE {\bf Input:} A set of enmeshed queries $Q$, a new query $q$, an inverted index $\pt$
\STATE {\bf Output:} A committable group $C$ (and $Q \leftarrow Q - C$),
\STATE \hspace{1.5cm}  or null (and $Q \leftarrow Q \cup \{q\}$).
\STATE Let $P \subseteq dom(\A_\users \cup \A_C)$, s.t., $\forall \vec{p} \in P$, $\vec{p}$ satisfies $q$
\FOR {$\vec{p} \in P$}
\STATE {\em // STEP 1: Identify queries whose selection constraints are satisfied by $\vec{p}$.}
\STATE $PQS_p \leftarrow \pt.lookup(\vec{p})$
\STATE $CQS_p \leftarrow$ subset of queries in $PQS_p$ satisfying $\vec{p}$
\STATE
\STATE {\em // STEP 2: Identify subset of queries whose join and cardinality constraints are satisfied}
\STATE Set $S_p \leftarrow \{q\}$
\FOR {$q' \in CQS_p$ {\em // in random order}}
\IF {($q'$ matches $\forall q \in S_p$)}
\STATE $S_p \leftarrow S_p  \cup \{q'\}$
\STATE $C \leftarrow \mbox{\sc FindCommittableGroup($S_p$)}$
\IF {$C$ is not null}
\STATE Remove all queries in $C$ from $Q$ and $\pt$.
\STATE Return $C$
\ENDIF
\ENDIF
\ENDFOR
\ENDFOR
\STATE {\em // No committable group found}
\STATE Add $q$ to $\qt$ and to $\pt$.
\end{algorithmic}
}
\label{algo:match}
\vspace{-1mm}
\end{algorithm}

Given a query $q$, BMA iterates over all possible points $\vec{p} \in \A_C$ that satisfy the selection constraints of $q$ in random order. For each $\vec{p}$, it first finds a set of queries ($PQS_p$) that partially match $q$ using $\pt$, which is an in-memory inverted hash index defined over a subset of coordination attributes (e.g., location and time). This index (built using techniques in \cite{fontoura10}) can support hierarchical values, e.g. weekday instead of dates.  Since $\pt$ is only on a subset of attributes, not all queries in $PQS_p$ are jointly satisfied by satisfy $\vec{p}$. Hence, next a subset of queries satisfying $\vec{p}$, called $CQS_p$, is computed. $CQS_p$ is the ``short list'' we referred to earlier.  If assumption {\em A1} holds, $|CQS_p|$ will be small.

However, not all pairs of queries in $CQS_p$ may match according to the join constraints. Hence, BMA next greedily computes $S_p \subseteq CQS_p$, a subset of queries that match (and satisfy $\vec{p}$) as follows: Starting with $S_p = \{q\}$, BMA iteratively adds a randomly chosen query $q' \in CQS_p$ to $S_p$ if it matches every other query in $S_p$ (based on join constraints). As the following example illustrates, there may be many choices for $S_p$ for the same $CQS_p$.

\begin{example}\label{ex:greedy}
Consider queries with a user attribute $A_{rating} \in \{1, 2, \ldots, 5\}$. Let $q$ be a query with no join constraints, with $A_{rating} = 3$. Suppose $CQS_p$ consists of $q_1, q_2, q_3$, such that $q_1.A_{rating} = 4$, $q_2.A_{rating} = q_3.A_{rating} = 2$, and $card(q_2) = card(q_3) = 3$. Suppose $q_1$ has a join constraint $\that.A_{rating} < A_{rating}$ (only play with lower ranked players), and $q_2, q_3$ have join constraints $\that.A_{rating} = A_{rating}+1$ (only play with players ranked 3). Then if the queries are considered in the order $q_1, q_2, q_3$, the resulting $S_p = \{q, q_1\}$. If $q_2$ or $q_3$ is considered before $q_1$, then the resulting $S_p = \{q, q_2, q_3\}$.
\end{example}

Every time a new query is added to $S_p$, the {\sc FindCommittableGroup} subroutine attempts to find a committable group that satisfies cardinality constraints as follows: Every query added to $S_p$ is also added to an inverted index from possible group sizes to queries, called $\gt$. A query with multiple group sizes appears multiple times in $\gt$, one for each group size. BMA iterates over $\gt$ to check whether there is some $m$ such that there are $\geq m$ queries in $S_p$ that permit a group of size $m$. BMA picks some group of $m$ queries. Assumption {\em A2} ensures that the average number of iterations over $\gt$ is small. If a group is formed, all queries from the group are removed from the system. Otherwise, we add current query q into the $\gt$.

\begin{example}\label{ex:alternatives}
For instance, suppose $q_1, q_2, q_3, q_4, q_5$ are matching queries in $S_p$ with cardinality  constraints $2, [2,3], [2,3], [2,3]$ and $3$ respectively. Then $\gt$ will contain $2$ entries: $2 \rightarrow \{q_1, q_2, q_3, q_4\}$ and $3 \rightarrow \{q_2, q_3, q_4, q_5\}$. Any pair from  $\{q_1, q_2, q_3, q_4\}$ or any subset of size 3 from $\{q_2, q_3, q_4, q_5\}$ are committable. {\sc FindCommittableGroup} will return some size 3 subset of $\{q_2, q_3, q_4, q_5\}$.
\end{example}

\subsection{Extensions to BMA}
BMA randomly traverses the ``short list" and returns the first committable group it finds. The next three algorithms extend BMA by more carefully choosing a committable group (among many alternatives), delaying committing to improve throughput, and by using more intelligent traversal orders.

\subsubsection{NES: No Early Stop}
BMA may choose to commit a group even before all the queries in $CQS_p$ are seen.  NES, on the other hand, computes $S_p$ by considering {\em every} query in $CQS$, and then calls {\sc FindCommittableGroup} to find the largest subset of $S_p$ satisfying cardinality constraints.  In Example~\ref{ex:alternatives}, suppose the queries in $S_p$ are considered in order of their indices. BMA will commit only 2 queries, $(q_1, q_2)$ after seeing the first two queries. On the other hand, NES waits to see all queries and commits $(q_2, q_3, q_4)$.

\subsubsection{DELAY: Delayed Matching}
Both BMA and NES are best-effort -- when a new query arrives, if a group can be committed, it will be committed. However, delaying matching smaller groups may help find a larger group later.

\begin{example}\label{ex:delay}
Consider again Example~\ref{ex:alternatives}. Suppose queries come in the order of their index. After $q_1$ and $q_2$ enter the system, any best-effort algorithm would commit the group $(q_1, q_2)$. Similarly, after $q_3$ and $q_4$ enter the system, any best-effort algorithm would commit the group $(q_3, q_4)$. However, if we waited till $q_5$ came into the system, NES  can commit $(q_3, q_4, q_5)$, thus committing all queries.
\end{example}

We implement DELAY as follows.  Let the {\em age} of a query denote the time elapsed since it was inserted into the system\footnote{We assume each query is inserted at a new time instant.}. When a query first arrives at time $t$, we compute its $CQS_p$, and proceed to find a committable group only if the average age of the $CQS_p$ is greater than a threshold $\tau$. Otherwise, $q$ is inserted into a scheduler, akin to a timing wheel, to be reevaluated at time $t+ \tau$. A query is reevaluated at most once. DELAY can increase processing time and query latency, as some queries are evaluated twice.  If at most $m$ queries simultaneously enter the system, DELAY returns no more than $2m$ committable groups: $\leq m$ groups from new queries and $\leq m$ from delayed queries.

\subsubsection{Query Matchibility}
\noindent All of the previous algorithms may choose to commit a group that is not optimal in terms of future queries.

\begin{example}
Consider queries that have an attribute $A_{where}$ with domain $\{Cupertino, Sunnyvale, Campbell\}$. Suppose $A_{where} = Campbell$ is very rare. Consider a new query $q$ with $A_{where} = Campbell$ and cardinality constraint $[2,3]$. Suppose its $CQS$ contains three queries $q_1, q_2, q_3$ with $q_1, q_2$ having $A_{where} \in \{Cupertino, Campbell\}$ and $q_3$ requiring $A_{where} \in \{Campbell\}$, resp. Moreover, $q_1, q_2$ have cardinality constraint $3$, while $q_3$ requires $2$. Then, returning the group $(q, q_3)$ is better than returning $(q, q_1, q_2)$ (even though the latter is larger), since it is more likely that a Cupertino query will arrive in the future than a Campbell query.
\end{example}

We call the propensity of a query to be matched as its {\em matchibility}, and is intuitively the expected number of other queries that will match $q$. Intuitively, a query has  low matchibility either because its selection or join constraints are hard to satisfy, or because it has large cardinality.  We use a simple online approach to estimate matchibility.

We define the matchibility of a query $q$ as the number of times $q$ appears in a $CQS_p$ before it is committed. Intuitively, a query that appears more frequently in the $CQS_p$ of other queries should have a higher chance to find matches. The initial value of a query's matchibility is
$$m(q) = n_q + c/lb_q$$
where $c$ is a constant, $lb_q$ is the lower bound on the cardinality constraint in the query, and $n_q = ub_q - lb_q + 1$ is the number of possible cardinality values that satisfy $q$. For example, if a query has a cardinality constraint $[4,6]$, then its initial matchibility is set to $c/4 + 3$. The motivation is that smaller groups are more matchable; further, a wide range of cardinalities is more matchable. Every time $q$ appears in some other query's $CQS_p$, $m(q)$ is incremented by 1.

We consider two simple heuristics -- {\em high matchibility first (HMF)} and {\em low matchibility first (LMF)} that traverse the ``short list" $CQS_p$ in matchibility order. Intuitively, LMF may match more queries than HMF by reserving popular queries for later consideration.

\noindent{\bf Discussion:}
The 3 extensions presented in this section are orthogonal, and an algorithm can be implemented with some or all of them together. However, given that DELAY postpones queries for later consideration, it seems unreasonable to not consider all queries. Hence in our implementation, DELAY always implies NES, but not vice versa.

\section{Evaluation}
\label{sec:expt}

In this section we evaluate our algorithms on synthetically generated workload for an example sports coordination application.
\subsection{Example Application}
We describe our evaluation in terms of an example sports coordination application. In this application, users want to find sports partners
that live nearby and with similar ratings. There are two user attributes -- home location, rating. There are four coordination attributes -- location, time, sport and opponent rating. Each user specifies selection constraints on the desired location(s), desired time(s), and sport. Additionally, a user may specify what the opponent's rating must be -- this can be captured using a join constraint. For instance, a user wanting to play tennis in Cupertino or Sunnyvale at 8 PM with 2,3 or 4 people with a rating 5 can be written as the following query:
\begin{eqnarray*}
& A_{time} \in [8PM] \wedge A_{location} \in \{Cupertino, Sunnyvale\} \wedge & \\
& A_{sport} \in \{Tennis\} \wedge \that.A_{rating} = 5 \wedge card(q) \in [2,4]&
\end{eqnarray*}
\subsection{Metrics and Setup}
We implemented all the algorithms presented in the paper, namely BMA, NES, DELAY, LMF and HMF. We compare our group coordination algorithms using system measures, throughput and  query processing time, as well as user measures, latency and fairness. These metrics are defined in Section~\ref{sec:model}. We use a single machine that is a 2 x Xeon L5420 2.50GHz running 64bit RHEL Server 5.6 with 16 GB memory.

\begin{table}[t]
\centering
{\small
\begin{tabular}{|c|c|c|c|}
\hline
Parameter & Domain & Multi- & Distribution  \\
Name &  & valued & \\
\hline
Home Location & $\{1, \ldots 10\}^2$ & 1 & Zipfian \\
Sport & $\{1, \ldots 10\}$ & $\leq 3$ & Uniform \\
Rating & \{1, \ldots 5\} & 1 & Zipfian \\
\hline
\end{tabular}
}
\caption{Parameter distribution for generating users in our synthetic workload.}
\label{tab-users}
\end{table}

\begin{table}[t]
\centering
{\small
\begin{tabular}{|c|c|c|c|}
\hline
Parameter & Domain & Multi- & Distribution  \\
Name &  & valued & \\
\hline
Location & $\{1, \ldots 10\}^2$ & $\leq 2$ & Zipfian \\
Time & 28 days $\times$ 12 hrs & $\leq 2$ & Bimodal \\
Sport & $\{1, \ldots 10\}$ & $1$ & Uniform \\
Rating & $\{1, \ldots 5\}$ & $\leq 2$ & Zipfian  \\
Group Size & $\{2, \ldots, 12\}$ & $1$ & Zipfian\\
\hline
\end{tabular}
}
\caption{Parameter distribution for generating queries in our synthetic workload.}
\label{tab-queries}
\end{table}

\subsection{Synthetic Workload}
\label{sec:workld}

Our default data trace contains 1 million
queries for 200,000 unique users.  Each user has have about 5 queries in
the workload since each user has a equal chance to be chosen for a query.  We also
enforce that no more than one query from the same user can share a time slot.
Each query in the workload contains a monotonically increasing
time stamp, desired location(s), desired time(s), action, desired ratings for
potential matching candidates and a range of desired group size.  We now specify
details for each parameter. The parameters are also summarized in Tables~\ref{tab-users} and \ref{tab-queries}.  All Zipf distributions
use an exponent parameter of $1$ where the second
most common frequency occurs 1/2 as much as the first, etc.

{\em Location:} Our workload generator generates 100 locations, represented by a two
dimensional array location[10][10].  Each user is assigned a location (i.e.
location[k][j]) as his/her home location based on a Zipfian distribution that models the
fact that some locations (e.g., San Francisco) have more users than others (say Brisbane,
CA). Besides the home location, a user query can also choose from among 4 neighboring locations: if a user is at location $[k],[j]$, the neighboring locations
are $[k+1][j]$, $[k-1][j]$, $[k][j+1]$, $[k][j-1]$.
This models the fact users that only want to play sports in locations "near" their
home locations. We allow at most 2 locations specified in a query using a Zipfian distribution wherein it is more probable to generate queries with 1 location (always home location) than 2 alternative locations
(home location + 1 choice made uniformly among the 4 neighbors).

{\em Time:} Times are represented as hourly slots, e.g. 2/1/2012 3pm. The domain is
between 8am-8pm for the following 4 weeks. We assign higher probability when
generating slots on weekends than those on weekdays. Times in a query are chosen uniformly
from slots in the domain based on their relative probabilities. No more than 2 time slots
can be specified in a query; as in the case of location, we use a Zipfian to choose the
number of time slots specified in a query, with a higher probability of a query
being generated with 1 time slot than with 2 time slots.

{\em Actions and Ratings:} Each user has a set plays up to three
sports with the exact number being chosen using a Zipfian distribution, with one being the most probable.  Once the number of
sports is chosen, the specific sports played by a user is chosen uniformly at random
from a set of 10 sports.  We also assign a rating ($r$) for each
chosen sport to represent a user's skill level, again using a Zipfian
distribution.  Similar to the case of location, we limit each query
to have no more than 2 desired ratings, where most queries will have 1 rating
(equal to $r$), some have two ratings ($r$ and either $r+1$ or $r-1$) where the frequency of 1 versus 2 is chosen by a Zipfian.
While a user can play at most 3 sports, each user query picks a {\em single} sport
uniformly at random from among the set of sports the user plays.

{\em Group sizes:} Group size generation is described in the next section.

\subsection{Finding a yardstick for optimality}
Given that finding an optimal solution to enmeshed queries is NP hard, a major challenge is measuring how
close our matching algorithms come to the optimal solution. In addition, since the total
space of attributes (e.g., the cross-product of location, time etc) is large,  if we
get a low percentage of queries matched by our algorithms, we cannot determine
whether the low match percentage
is caused by ineffective algorithms or because the workload generated
sparse points in a large domain space. To solve this dilemma, as we said earlier,
we generate data traces that
only contains queries belong to pre-matched groups. The advantage of this approach
is that we know an optimal solution is able to find matches for all queries in the
workload, which then serves as a yardstick for our heuristic algorithms.

More precisely, to generate queries with pre-matched groups, we first generate a group
signature that contains a single value for each dimension and a group size ($k$).
We then select a set of qualified users using an inverted index that maps from (location,
sport, rating) to users. Next, we generate $k$ queries that are compatible with the group
signature from the set of qualified users by following distributions described in
the workload generation description.

We generate a query's group size as follows.
A pre-match group size $m$ is first generated using a Zipfian distribution on the domain $[2,12]$ with group sizes of size 2
being most frequent.  Next, for each query, we generate a random value between $2$ and
$m$ as the low end of the group
size range in that query.  We then generate the query group size as $Min(2*(m-l)+1, 12-l+1)$, where 12 is the maximum
group size.  This serves to make $m$ the center of each group size
range in each prematched query.

In the rare case that we cannot find enough queries
to satisfy a group
signature, we generate a new group signature and continue. The process
terminates when a specified total number of queries have been generated. In order to
spread queries from pre-matched groups across the workload, we define a group interval
as a parameter that controls randomization. For example, if the
group interval is 5000, our workload generator will generate 5000 pre-matched groups
at a time and then randomly shuffle all queries within each range of 5000.

To serve as a yardstick, we designed an "optimal" algorithm (OPT) that is only
required to scan the workload once and find matches based on pre-matched group
id and group size information embedded in the query trace. This information, of course,
is ignored by the
regular matching algorithms.  The match percentage for the "optimal" solution is 100\%.
The average query latency for OPT varies when the group interval changes. However,
it should always be less than the group interval since a match is guaranteed to be
found within a group interval in OPT.

\subsection{Experimental Results}

In this section, we describe and interpret the experimental results to
identify the best performing heuristic algorithms.
Recall that BMA is the basic matching algorithm that processes the  set of
potential matches (CQS) in random order but stops after finding the first match;
LMF and HMF are also early stopping algorithms, but they process the CQS in the
order of least matchable (respectively highest matchable); finally NES is an
algorithm that processes the entire CQS even after finding an initial match to
search for larger matches.

We compare these algorithms in terms of system measures
(percentage of queries matched and query processing time), and user metrics (average
query latency and fairness measured by average group size).  Suppose, for example, that
a matching algorithm has a group size 2 query it can immediately
satisfy, but also a compatible group size 4 query that is compatible but not
immediately satisfiable.  The algorithm has
to exercise forbearance in order to satisfy the size 4 request in the future; greedy
choices, by contrast, can have good match percentage but poor fairness.

\begin{figure*}[t]
  \begin{minipage}[t]{0.29\textwidth}
\begin{center}
\scalebox{0.4}{\includegraphics{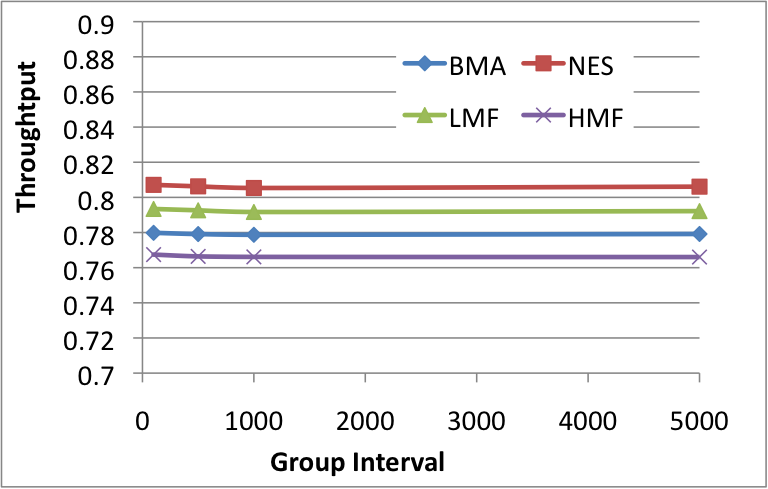}}
\end{center}
\vspace{-4mm}
\caption{Percentage of matched queries (throughput) as group interval increases.}
\label{fig-throughput}
  \end{minipage}
\hspace{5mm}
  \begin{minipage}[t]{0.27\textwidth}
\begin{center}
\scalebox{0.37}{\includegraphics{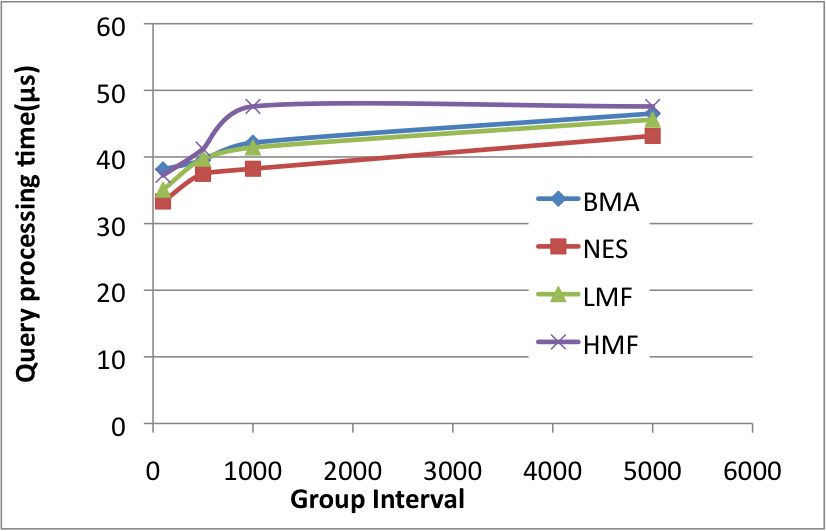}}
\end{center}
\vspace{-4mm}
\caption{Average query processing time as group interval increases.}
\label{fig-time}
  \end{minipage}
\hspace{5mm}
  \begin{minipage}[t]{0.27\textwidth}
\begin{center}
\scalebox{0.4}{\includegraphics{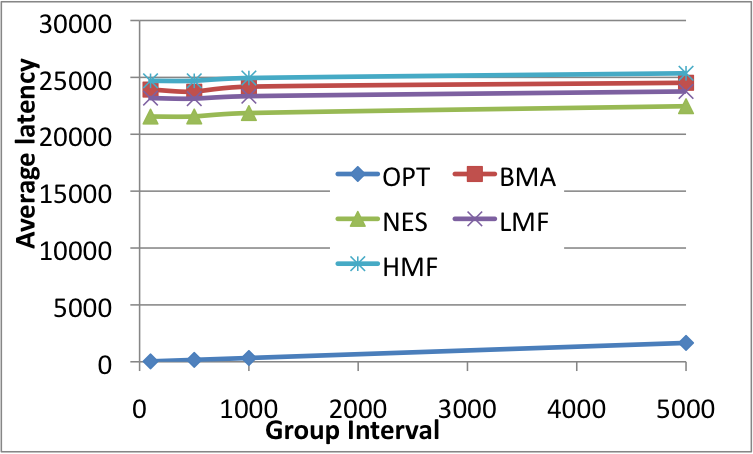}}
\end{center}
\vspace{-4mm}
\caption{Average query latency as group interval increases.}
\label{fig-latency}
  \end{minipage}
\end{figure*}

\begin{figure*}[t]
  \begin{minipage}[t]{0.29\textwidth}
\begin{center}
\scalebox{0.4}{\includegraphics{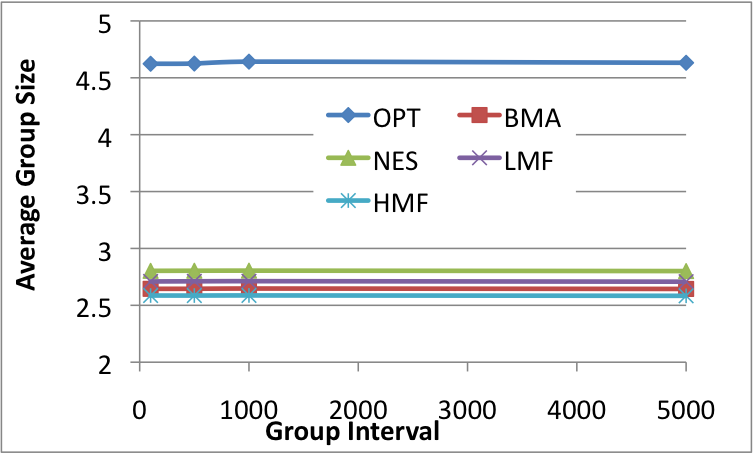}}
\end{center}
\vspace{-4mm}
\caption{Average matched group size as group interval increases.}
\label{fig-groupsize}
  \end{minipage}
\hspace{5mm}
  \begin{minipage}[t]{0.29\textwidth}
\begin{center}
\scalebox{0.4}{\includegraphics{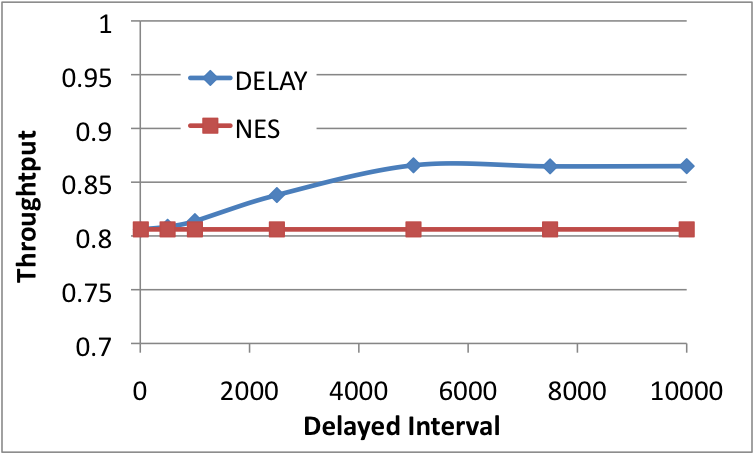}}
\end{center}
\vspace{-4mm}
\caption{Percentage of matched queries (throughput) as  delay interval increases.}
\label{fig-delaythroughput}
  \end{minipage}
\hspace{5mm}
  \begin{minipage}[t]{0.29\textwidth}
\begin{center}
\scalebox{0.4}{\includegraphics{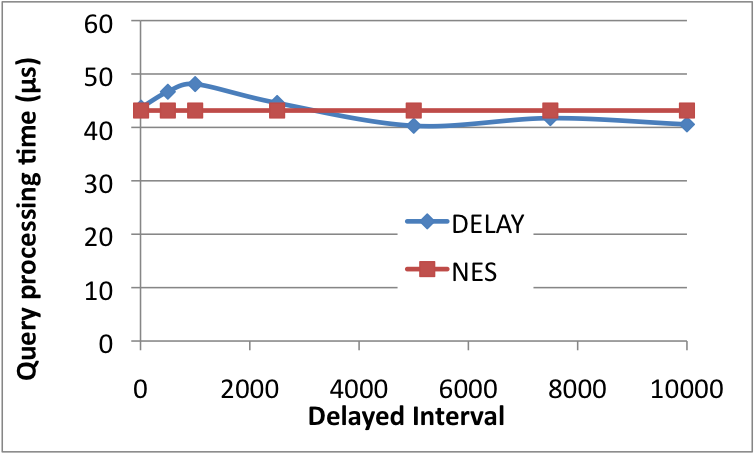}}
\end{center}
\vspace{-4mm}
\caption{Average query processing time as delay interval increases.}
\label{fig-delaytime}
  \end{minipage}
\end{figure*}

\begin{figure*}[t]
\centering
  \begin{minipage}[t]{0.4\textwidth}
\begin{center}
\scalebox{0.55}{\includegraphics{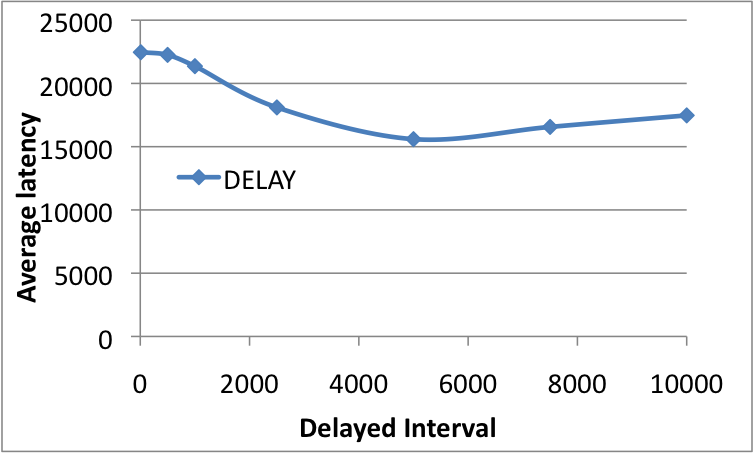}}
\end{center}
\vspace{-4mm}
\caption{Average query latency as delay interval increases.}
\label{fig-delaylatency}
  \end{minipage}
\hspace{5mm}
  \begin{minipage}[t]{0.4\textwidth}
\begin{center}
\scalebox{0.55}{\includegraphics{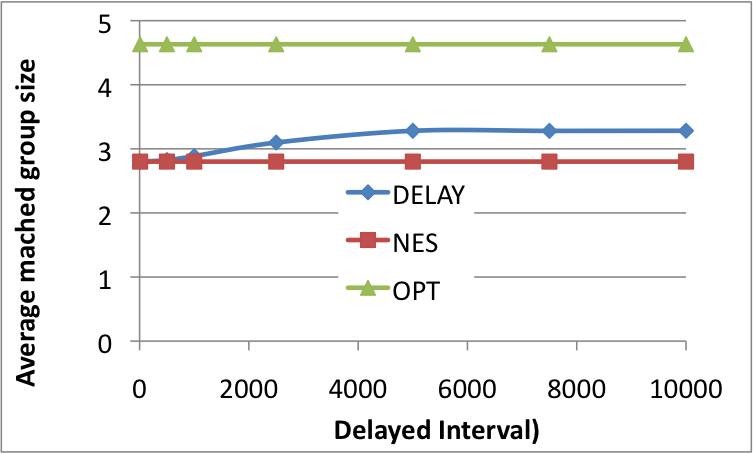}}
\end{center}
\vspace{-4mm}
\caption{Average matched group size as delay interval increases.}
\label{fig-delaygroupsize}
  \end{minipage}
\end{figure*}

{\em Performance with Varying Group Interval:}
Recall that the group interval is a parameter of the workload generator that
controls how scrambled the workload compared to a pre-matched set of queries that "seeds"
the workload. Clearly, an omniscient, optimal algorithm will be able to match all
queries in such a workload because this workload is a randomization of a workload
where there is  a match for all queries.    However, our algorithms such as BMA will do worse
than optimal because they are online and not offline, and make heuristic choices to reduce computation.
Figure~\ref{fig-throughput} shows the percentage of queries matched by our various
heuristics.  The hypothetical offline optimal algorithm is not shown because it always
achieves 100\% matching.

The figure shows that the percentage of matches does not change perceptibly as the
group interval (scrambling distance) increases for all algorithms.  It also shows
NES has the highest match percentage (around 81\%) while HMF performs worst (around 76\%)
and the baseline BMA is slightly better (77\%).  This is not surprising because
NES tries to find the best match without stopping.  Intuitively, HMF is bad
because it processes highly matchable queries early which makes it less
likely that later queries in the CQS will be matched.
Given early stopping, LMF does better because it does not squander more matchable
queries when other queries will do instead; a random order performs in between LMF and HMF.
Note that the difference between 76\% and 81\% may seem small but if there is revenue
attached to each match, a 5\% uptick in revenue is appreciable.

NES gets a higher match percentage by processing the CQS more thoroughly.
Thus we might hypothesize that NES will require more processing time in return for a
higher match percentage. Figure~\ref{fig-time} is perhaps surprising because
it shows that NES has the smallest average processing time (for example,
around 40 $\mu$sec at a group interval of 2000) while HMF has close to 50 $\mu$sec
and the others are in between).   Intuitively, this is because NES process
more queries on average per scan of  the CQS and hence removes queries early
from the CQS; this in turn requires less scanning overhead in the future.
Note also that the processing time increases slightly with scrambling distance;
this makes sense because the more far apart potentially matchable queries
are placed in the workload, the greater the average length of the CQS and
hence the processing time.

While NES does well from the system provider's point of view, what
about from the user point of view?  Figure~\ref{fig-latency} shows the
average latency (measured in terms of queries after which a user request is satisfied)
for the algorithms.   Recall that the workload is generated from a pre-matched set
of queries that are scrambled randomly within a group interval.  The latency of a query $Q$ for the optimal algorithm
is the difference between the index of the last query in the pre-matched set for
$Q$ and the index of $Q$ itself.  Thus the latency for the optimal algorithm shown as a
reference is very small and increases linearly with the group interval.  Once again,
NES has the lowest average latency (around 21,000 queries) and HMF has the worst
latency (around 25,000).

While NES is the best latency, it is much worse than optimal.   Further scrutiny
of the results revealed that 60\% of queries in NES are fast because NES does find
most pre-matched groups.  However, because NES is a heuristic, in
around 40\% of the cases, NES misses finding its pre-generated (and hence optimal)
groups.  In that case, either the query will never be matched or, by random chance
based on the workload, it can be matched later but after a much higher latency whose
average value is roughly the average time for a query to find another set of compatible queries.
Thus the average latency is pushed very high by outliers.   We calculated the median
latency for NES at a group interval of 5000 as 135 and the 90\% latency as around
7591 compared to the optimal latency of 1658.    Given that the outlier latency is
largely an artifact of the workload generation model, this artificial increased
latency is unlikely to be an issue in practice.

Finally, recall that in measuring fairness, we are trying to penalize
algorithms that boost match probability by artificially favoring some group
sizes (say small groups) over others.   Part of the novelty of our social
coordination system is that it allows coordination among users with various
size group requests (unlike say a dating service where the sizes are always 2).
It would be unfortunate if this generality in the service specification was
accompanied by service bias in terms of requested group size.  We use the average size
of the group when compared to the optimal algorithm as a fairness metric.
Figure~\ref{fig-groupsize} shows the average group size returned by the various algorithms.

Given the workload, the average group size of matches returned by the optimal algorithm
is around 4.6.  On the other hand, BMA is around 2.6 and NES is around 2.8.
This again follows because NES is prepared to wait to get larger groups and hence is
fairer to larger groups.

The results so far show that NES is better than using matchability as
a simple stopping criterion and certainly better than BMA with its random order
and early stopping.  However, despite this we believe that LMF is a promising heuristic and comes close to NES.   For example, if the CQS gets too large, NES may be
infeasible and the LMF heuristic may be needed at least for graceful degradation.

Further, LMF particularly shines when there are workloads with a
significant fraction of very discriminating users that are hard to satisfy (e.g., want to play
tennis in Campbell, CA at 6 am) and very easygoing users (willing to play tennis anywhere in the Bay Area at any time).  A small modification of Example 4 can be constructed by adding a fifth query
$q_5$ that can satisfy $q_1$ and $q_2$ such that all 5 queries are satisfied by LMF but only 3 are satisfied by NES.  Repeating this
sequence indefinitely leads to a workload where NES
achives a match percentage of only 60\% compared to 100\% for LMF.

{\em Performance of Delayed Evaluation:}
So far we have looked at the performance of the basic algorithm augmented with
matchability and no early stopping.
We now evaluate the effect of  our second and more complex refinement: adding a
fixed delay threshold.  Recall that in delayed evaluation, for each query,
we calculate the average age of other queries in its Compatible Set of Queries (CQS)
and delay the processing of that queries if the average age is too small based on
simple timer processing.  How does the basic performance of all measures
change as the delay increases?  Intuitively, if the group interval (scrambling interval)
is say 5000, delaying processing by around 5000 should make it more likely that a
query will behave close to its optimal and the gains should fall off after that.

This is indeed what we find.  In all cases, we fix the group interval at 5000
and use the NES algorithm augmented with a delay parameter that varies from
10 to 10,000. Figure~\ref{fig-delaythroughput} shows that as the delay
increases the percentage match increases from around 80\% for NES (reference line) to
around 87\% but the gains fall off after a delay of 5000.    Figure~\ref{fig-time}
shows that the query processing time can increase with delay from around 40 $\mu$sec to
around 50 $\mu$sec because of the overhead of the delay processing.  However, as we get
closer to the optimal delay of 5000, the processing time falls back to around 40 $\mu$sec
and is even slightly faster than NES.  We hypothesize that the better match efficiency
makes up for the increase in delay timer processing (which is a fixed overhead regardless
of the magnitude of the delay).

Figure~\ref{fig-delaylatency} shows a more interesting tradeoff.  As the delay
increases, we see that the latency drops sharply at first from 23,000 to around 15,000.
This is because increasing the match percentage by 5\% reduces the outliers by 5\%
which sharply decreases the average latency,  Beyond the optimal delay of 5000,
however, the match percentage does not increase and the delay is merely an artificial
waiting penalty: thus beyond 5000 the average latency starts increasing again.

The user will rightly complain that this is cheating.  In a synthetic workload, we
knew the group interval and hence could estimate the optimal delay.  However, in a
real deployment, the system can keep statistics such as
Figure~\ref{fig-delaylatency} by periodically varying the delay (akin to explore-exploit
systems) to find the knee of the curve. The knee of the
curve  can be used to estimate  the optimal amount of delay to be added.

Finally, Figure~\ref{fig-delaygroupsize}  shows that delayed processing also
decreases the bias against large groups by coming closer to the average group
size of the optimal algorithm.  As the delay increases to the knee of the curve (5000),
the average group size gets close to 3.5 which is much closer to 4.5 which
represents the ideal (optimal) compared to say 2.6 for BMA.

{\em Recommendations for Deployment:}
Our results suggest that a combination of DELAY and NES performs well across all measures. The system can find the smallest value of added delay after which matching performance does not improve significantly. LMF may be useful in workloads that have many discriminating users, and when the average CQS length become so long that NES becomes infeasible.

In terms of scaling, an average of 40 $\mu$sec implies 25,000 queries a second on a single core using a high end 2.5 Ghz machine.  However, throughput can be increased by forking multiple threads, leveraging multiple cores, and parallelizing query processing by region using multiple servers: users in Cairo are unlikely to coordinate with users in Cupertino.

\section{Conclusions}
\label{sec:conc}
Social {\em coordination} may represent the next step beyond the social {\em awareness} provided by today's Online Social Networks (OSNs).  We focused on the problem of fluid social coordination where the set of people involved is unknown at the start, may change quickly over time, and may {\em not} be part of one's friends in any OSN. Such fluid coordination allows  forming weak ties~\cite{granovetter1973} that can enrich our lives beyond the strong ties formalized by OSNs.

In our formalism, users declaratively specify coordination objectives using selection constraints on coordination variables, join constraints on pairs of user variables, and group size constraints.  While declarative specification can reduce user effort compared to browsing, we recognize that, in some cases, browsing can allow more flexibility.  We suggest the following research directions in social coordination:

{\em 1. Economics:} Enmeshed queries extend dating services to arbitrary group sizes.   Some dating services provide better matches for users who pay more.   What is the natural way to extend enmeshed queries to specify a willingness to pay for matches?  How does this relate to auction theory?

{\em 2. Soft Constraints:} While we consider hard coordination constraints, users might prefer to declare relative preferences over attributes like place and time. How can enmeshed queries be extended to handle such ``soft-constraints''?

{\em 3. Recommendations:} Amazon and NetFlix recommend new choices based on past selections.  On what basis should a coordination system recommend groups to users?  A user that often plays tennis may like to hear about 3 nearby, compatible tennis players who wish to find a fourth player.

{\em 4. Query Flexibility:}  Our simple model of a conjunction of disjunctions can be generalized.  For example, in ad matching~\cite{fontoura10} a richer set of boolean queries can be expressed at the possible cost of increased computation time.   Further, our paper uses a completely declarative (system chooses) model and allows no browsing (user chooses) as in Meetup.  What is the best way to combine browsing and declaration?

{\em 5. Data Mining:}  Social coordination tools may allow social scientists to answer questions about the sociology of coordination.  What are the metrics (e.g., fraction of successful coordinations, keystrokes to coordination completion) by which one judges a successful social coordination?  A social scientist could test hypotheses about successful coordinations such as ``small groups are more likely to find matches".

Social coordination fulfills a basic human need to connect to other human beings.  Further, fluid social coordination allows building weak ties to a larger set of people than our friends. While such weak ties classically occur by serendipity as in meetings at the proverbial water cooler, enmeshed queries may help institutionalize such serendipity.  While we have taken a small step in formalizing aspects of the problem, the vista for further work appears inviting: the solution space can be enriched if social scientists, database theorists, economists, and system designers all join the conversation.

\section{Acknowledgments}
We would like to thank Adam Silberstein for inspiring our work on enmeshed queries.

\bibliographystyle{abbrv}
\bibliography{enmeshed}

\end{document}